\newtheorem{proposition}{Proposition}
\newcommand{\Q}{\bm Q}
\newcommand{\w}{\bm w}
\newcommand{\V}{\bm V}
\newcommand{\Ss}{\bm S}
\newcommand{\q}{\bm q}
\newcommand{\vvv}{\bm v}
\newcommand{\ttheta}{\mathbf \Theta}
\begin{document}
\title{   {IRS-Assisted Wireless Powered NOMA: Do We Really Need Different Phase Shifts in DL and UL?}}
\author{\IEEEauthorblockN{Qingqing Wu, \emph{Member, IEEE}, Xiaobo Zhou, and Robert Schober,  \emph{Fellow, IEEE}
\thanks{    Q. Wu is  with the State Key Laboratory of Internet of Things for Smart City, University of Macau, China 999078 (email: qingqingwu@um.edu.mo).    }
   } }
\maketitle
%
\begin{abstract}
Intelligent reflecting surface (IRS) is a promising technology to improve the performance of wireless powered communication networks (WPCNs)  due to its capability to reconfigure signal propagation environments via smart reflection. In particular, the high passive beamforming gain promised by IRS can significantly enhance the efficiency of both downlink wireless power transfer (DL WPT) and uplink wireless information transmission (UL WIT) in WPCNs. Although adopting different IRS phase shifts for DL WPT and UL WIT, i.e., {\it dynamic IRS  beamforming}, is in principle possible but incurs additional signaling overhead and computational complexity, it is an open problem whether it  is actually beneficial. To answer this question, we consider an IRS-assisted WPCN where multiple devices employ a hybrid access point (HAP) to first harvest energy and then transmit information using  non-orthogonal multiple access (NOMA).  Specifically, we aim to maximize the sum throughput of all devices by jointly optimizing the IRS phase shifts and the resource allocation. To this end, we first prove that dynamic IRS beamforming is not needed for the considered system, which helps reduce the number of IRS phase shifts to be optimized.  Then, we propose both joint and alternating optimization based algorithms to solve the resulting problem. {Simulation results demonstrate the effectiveness of  our proposed designs over benchmark schemes and also provide useful insights into the importance of IRS for realizing spectrum and energy efficient WPCNs.}

\end{abstract}
\begin{IEEEkeywords}
IRS, wireless powered networks, NOMA, dynamic  beamforming, time allocation.
\end{IEEEkeywords}
\section{introduction}


Intelligent reflecting surface (IRS) has been recently proposed as a cost-effective technology to improve the spectral efficiency and energy efficiency of future wireless networks \cite{JR:wu2019IRSmaga}. Specifically, by smartly adjusting the phase shifts of a large number of reflecting elements, IRS can reconfigure the wireless propagation environment to achieve different design objectives, such as signal focusing and interference suppression. {There has been an upsurge of interest in investigating joint active and passive beamforming  for various system setups \cite{rajatheva2020white,xu2020resource,guan2019intelligent,zou2020wireless,ding2020simple,JR:wu2019discreteIRS,JR:wu2018IRS,huang2018achievable,huang2019reconfigurable,di2020hybrid}.
 In particular,  the fundamental \emph{squared power gain} of IRS was  unveiled  in  \cite{JR:wu2018IRS}.} 
{In \cite{huang2019reconfigurable} and \cite{di2020hybrid},  energy efficiency and achievable rate maximization problems were studied by considering continuous and discrete IRS phase shifts, respectively.}


\begin{figure}[!t]
\centering
\includegraphics[width=2.7in]{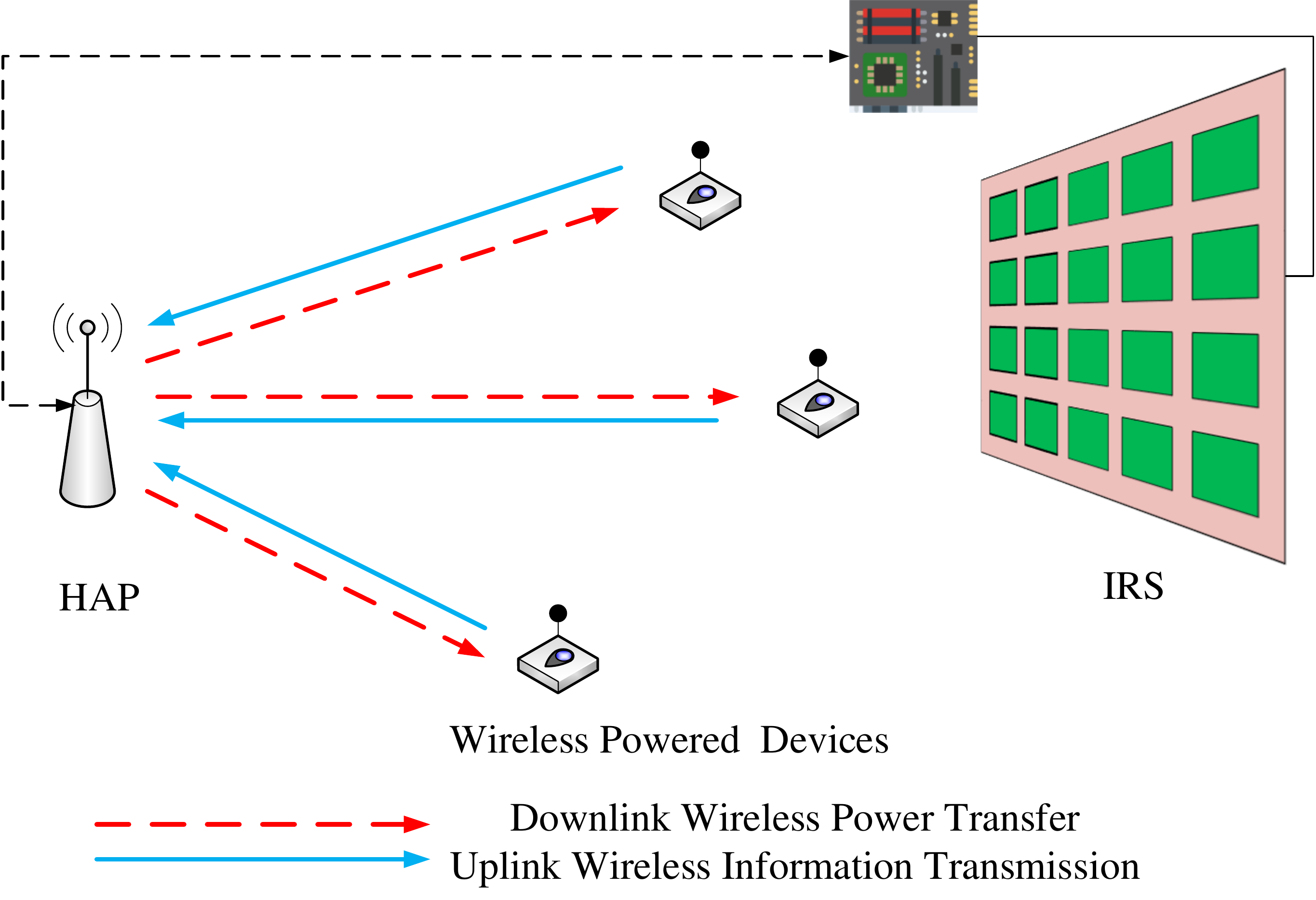}
\caption{An IRS-assisted WPCN employing NOMA for UL WIT.}\label{system:model} \vspace{-0.5cm}
\end{figure}

 While the above works  focused on applying IRS for improving  wireless information transmission (WIT),  there is a growing interest in exploiting the high beamforming gain of IRS to enhance the efficiency of wireless power transfer (WPT) to Internet-of-Things (IoT) devices.   One line of research targets IRS passive beamforming for simultaneous wireless information and power transfer (SWIPT), where information and energy receivers are concurrently served \cite{wu2019weighted,wu2019jointSWIPT,tang2020joint,liu2020energy}.  The other line of research focuses on IRS-assisted wireless powered communication networks (WPCNs), where self-sustainable devices first harvest energy in the downlink (DL) and then transmit information signals in the uplink (UL)
\cite{lyubin2021IRS,zheng2020joint}, based on the well-known ``harvest and then transmit''  protocol illustrated in Fig. \ref{system:model}. The WPCN sum throughput maximization problem was studied in \cite{lyubin2021IRS}  for UL WIT with time-division multiple access (TDMA). To improve the WIT efficiency, space-division multiple access (SDMA) was employed for UL WIT in  \cite{zheng2020joint} by jointly optimizing the IRS phase shifts and the transmit powers. 

Non-orthogonal multiple access (NOMA) is practically appealing for UL WIT in WPCNs due to its capability to improve spectrum efficiency and user fairness by allowing multiple users to simultaneously access the same spectrum. To the best of the authors' knowledge, optimization of IRS-assisted WPCNs employing NOMA for UL WIT has not been studied  in the literature yet.  Furthermore,  since DL WPT and  UL WIT occur in different time slots and have different objectives, it is usually believed that exploiting different IRS phase-shift vectors in the two phases, which is referred to as \emph{dynamic IRS beamforming},  may improve system performance. As such, all existing works on IRS-assisted WPCNs (e.g., \cite{lyubin2021IRS,zheng2020joint}) naturally assume dynamic passive beamforming in their problem formulations. However,  it remains an open problem whether dynamic IRS beamforming is actually beneficial for maximizing the throughput of wireless powered IoT networks employing NOMA.

Motivated by the above considerations, we study an IRS-assisted WPCN where an IRS is deployed to assist the DL WPT and UL WIT between a hybrid access point (HAP) and multiple devices.
For this setup, we aim to maximize the sum throughput of all devices  by jointly optimizing the resource allocation and IRS phase shifts (i.e., passive beamforming). For the first time, we unveil that adopting different IRS phase shifts for DL WPT and UL WIT, i.e.,  dynamic IRS beamforming, is not beneficial for the considered system. Since the algorithmic computations are typically executed at the HAP, which then sends the optimized phase shifts to the IRS controller, this result not only reduces the number of optimization variables, but also lowers the feedback signalling overhead, especially when the IRS is practically large. Exploiting the insight gained, we further propose both joint and alternating optimization based algorithms to solve the resulting problem. Numerical results  demonstrate the significant performance gains achieved by the proposed algorithms compared to benchmark schemes and reveal that  integrating IRS into WPCNs not only improves the system throughput but also reduces the system energy consumption.

\emph{Notations:}  For a vector $\bm{x}$,  $[\bm{x}]_n$ denotes its $n$-th entry and $\bm{x}^T$ and $\bm{x}^H$ denote its transpose and Hermitian transpose, respectively. $\mathcal{O}(\cdot)$ denotes the computational complexity order. $ \mathrm{Re}\{\cdot\}$ denotes the real part of a complex number.  ${\rm{tr}}(\Ss)$ denotes the  trace of matrix $\Ss$. $\arg(\bm{x})$ denotes the phase vector of $\bm{x}$.

\section{System Model and Problem Formulation}
\subsection{System Model}

As shown in Fig. 1, we consider an IRS-aided WPCN, which comprises an HAP, an IRS, and $K$ wireless-powered devices. The HAP and the devices are all equipped with a single antenna and the IRS consists of $N$ reflecting elements.
To ease practical implementation, the HAP and all devices are assumed to operate over the same frequency band, with the total available transmission time denoted by $T_{\max}$.  In addition, the quasi-static flat-fading channel model is adopted which means that the channel coefficients remain constant during $T_{\max}$. As such,  UL/DL channel reciprocity holds for all channels, which allows channel state information (CSI) acquisition for the DL based on UL training.  The WPCN adopts the typical ``harvest and then transmit'' protocol  where the devices first harvest energy from the DL signal emitted by the HAP and then use the available energy to transmit information to the HAP in the UL.    
To be able to characterize the maximum achievable performance, it is assumed that the CSI of all channels is perfectly known  at the HAP.  The equivalent baseband  channels  from the HAP to the IRS, from the IRS to device $k$, and from the HAP to device $k$ are denoted by $\bm{g}\in \mathbb{C}^{N\times 1}$, $\bm{h}^H_{r,k}\in \mathbb{C}^{1\times N}$, and ${h}^H_{d,k}\in \mathbb{C}$, respectively, where $k = 1, \cdots,K$.




 %

During DL WPT, the HAP broadcasts an energy signal with constant transmit power $P_{\rm A}$ during time $\tau_0$.
 The energy harvested from the noise is assumed to be negligible as in \cite{ju14_throughput},  since  the noise power is much smaller than the  power received from the HAP. Let $\ttheta_0 = \text{diag} ( e^{j\theta_1}, \cdots,  e^{j\theta_N})$ denote the reflection phase-shift  matrix of the IRS for DL WPT where $\theta_n\in [0, 2\pi), \forall n$.
 Thus, the amount of harvested energy  at device $k$ can be expressed as
\begin{align}\label{eq3}
E^h_k&=\eta_kP_{\rm A}|h^H_{d,k} +  \bm{h}^H_{r,k}\ttheta_0 \bm{g}|^2\tau_0  \\ \nonumber
&=\eta_kP_{\rm A}|h^H_{d,k} +   \bm{q}_k^H \vvv_0|^2\tau_0,
\end{align}
where $\eta_k \in (0,1]$ is the  energy conversion efficiency of device $k$, $\q_k= \bm{h}^H_{r,k}   \text{diag}(\bm{g})$, and $\vvv_0 = [e^{j\theta_1}, \cdots, e^{j\theta_N}]^T$.
For UL WIT, NOMA is adopted where all devices transmit their respective information signals to the HAP simultaneously for a duration of ${\tau}_{\rm 1}$  with transmit powers $p_k$, $k=1,\cdots,K$.  The HAP applies successive interference cancellation (SIC)  to eliminate multiuser interference.   
 Specifically, for detecting the message of the $k$-th device, the HAP first decodes the message of the $i$-th device, $\forall i<k$, and then removes this message from the received signal, in the order of $i=1, 2,...,k-1$. The signal received from the $i$-th user, $\forall i>k$, is treated as noise. Hence, the achievable throughput of device $k$  in bits/Hz   can be expressed as
\begin{align}\label{eq6}
r_k={\tau}_{\rm 1}\log_2\left(1+\frac{p_k |h^H_{d,k} + \q^H_k \vvv_1|^2 }{\sum_{i=k+1}^{K} p_i |h^H_{d,i} +   \q^H_i \vvv_1|^2+ \sigma^2}\right),
\end{align}
where $\sigma^2$  is the additive white Gaussian noise power at the HAP and $\vvv_1 = [ e^{j\varphi_1}, \cdots,  e^{j\varphi_N}]^T$  denotes the IRS phase shift vector for UL WIT.
  Then, the sum throughput is given by \cite{zeng2019spectral}
\begin{align}\label{eq6}
 \!\!\!\! R_{\rm sum}\!= \!\sum_{k=1}^{K}r_k \! =\! {\tau}_{\rm 1} \log_2\left( 1+\sum_{k=1}^{K}\frac{p_k|h^H_{d,k} +   \q^H_k \vvv_1 |^2}{\sigma^2}   \right).
\end{align}


\subsection{Problem Formulation}
Our objective is to maximize the  sum throughput of the considered system by jointly optimizing the IRS phase shifts, the time allocation, and the transmit powers. This leads to the following optimization problem
 \begin{subequations} \label{probm20}
\begin{align}
\text{(P1)}:  \mathop {\mathrm{max} }\limits_{   \overset{ \tau_{0}, {\tau}_{\rm 1},\{p_{k}\},  } { \vvv_0, \vvv_1 }  }   &{\tau}_{\rm 1}\log_2\left(1+\sum_{k=1}^{K}\frac{p_k|h^H_{d,k} +   \q^H_k \vvv_1 |^2}{\sigma^2}\right)\\
\mathrm{s.t.} ~~&  {p_k}{\tau}_{\rm 1}\leq \eta_kP_{\rm A} |h^H_{d,k} +   \q^H_k \vvv_0 |^2 \tau_0, ~ \forall k,  \label{eq201}\\
&\tau_{0}+{\tau}_{\rm 1}\leq T_{\mathop{\max}}, \label{eq202}\\
&\tau_{0}\geq0, ~ {\tau}_{\rm 1}\geq  0, ~p_k\geq  0, ~\forall k, \label{eq203} \\
& |[\vvv_0]_n|=1,  n=1,\cdots, N, \label{eq:modulus1} \\
& |[\vvv_1]_n|=1,  n=1,\cdots, N. \label{eq:modulus2}
\end{align}
\end{subequations}
In (P1), (\ref{eq201}) and (\ref{eq202}) represent the energy causality and total time constraints, respectively, (\ref{eq203}) are non-negativity constraints, and  \eqref{eq:modulus1} and \eqref{eq:modulus2} are unit-modulus constraints for the phase shifts employed for DL WPT and UL WIT, respectively. Intuitively, since the DL and UL transmissions have different objectives, i.e., WPT and WIT, adopting different phase-shift vectors, i.e., $\vvv_0$ and $\vvv_1$, is expected to be beneficial for maximizing the system sum throughput. However, using different IRS phase-shift vectors also increases the feedback signalling overhead to the IRS and the computational complexity at the HAP due to the large number of optimization variables.  Furthermore, (P1) is a non-convex optimization problem and difficult to solve optimally in general due to coupled optimization variables in (4a) and (4b) as well as the non-convex unit-nodulus constraints in (4e) and (4f).

\section{Proposed Solution}
In this section, we first answer the question whether the optimal solution of (P1) requires  dynamic IRS beamforming. Then, we propose two efficient algorithms to solve the resulting optimization problem.

\vspace{-0.1cm}
\subsection{Do We  Need Different Phase Shifts in DL and UL?}
\begin{proposition}
For (P1),  $\vvv^{\star}_0=\vvv^{\star}_1$ holds, where $\vvv^{\star}_0$ and $\vvv^{\star}_1$ are the optimal phase-shift vectors for DL WPT and UL WIT, respectively.
\end{proposition}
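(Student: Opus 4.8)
The plan is to take an arbitrary globally optimal solution $(\tau_0^\star,\tau_{1}^\star,\{p_k^\star\},\vvv_0^\star,\vvv_1^\star)$ of (P1) and exhibit another optimal solution in which the DL and UL phase-shift vectors coincide. Throughout, write $f_k(\vvv) = |h^H_{d,k} + \q_k^H \vvv|^2$ for any unit-modulus $\vvv$. First I would record two structural facts about the optimum. (i) $\tau_0^\star>0$ and $\tau_{1}^\star>0$: if either vanished the objective would be zero, whereas a strictly positive sum throughput is clearly achievable (e.g., split $T_{\max}$ evenly and transmit with positive power). (ii) Every energy-causality constraint \eqref{eq201} may be assumed active. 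Indeed, $p_k$ enters only the $k$-th such constraint and the objective is non-decreasing in $p_k$, so each $p_k$ can be raised until \eqref{eq201} holds with equality without affecting feasibility of the remaining constraints; hence $p_k^\star = \eta_k P_{\rm A} f_k(\vvv_0^\star)\tau_0^\star/\tau_{1}^\star$.

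Substituting the active powers $p_k^\star$ into the objective, the optimal value can be written as
\begin{equation}
\tau_{1}^\star \log_2\!\left(1 + \frac{P_{\rm A}\tau_0^\star}{\sigma^2\,\tau_{1}^\star}\, S(\vvv_0^\star,\vvv_1^\star)\right),
\end{equation}
where $S(\vvv_0,\vvv_1) = \sum_{k=1}^{K}\eta_k f_k(\vvv_0) f_k(\vvv_1)$. Since $\log_2(\cdot)$ is strictly increasing and $\tau_0^\star,\tau_{1}^\star>0$, global optimality forces $(\vvv_0^\star,\vvv_1^\star)$ to be a maximizer of $S$ over unit-modulus vectors: otherwise, replacing $(\vvv_0^\star,\vvv_1^\star)$ by a better pair and recomputing the powers as in (ii) would keep feasibility while strictly increasing the objective, a contradiction.

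The key step is then to notice that $S$ is symmetric and, by the AM--GM inequality, $f_k(\vvv_0)f_k(\vvv_1) \le \tfrac12\!\left(f_k(\vvv_0)^2 + f_k(\vvv_1)^2\right)$ for every $k$, with equality iff $f_k(\vvv_0)=f_k(\vvv_1)$. Multiplying by $\eta_k>0$ and summing yields $S(\vvv_0,\vvv_1) \le \tfrac12\!\left(S(\vvv_0,\vvv_0)+S(\vvv_1,\vvv_1)\right) \le \max\{S(\vvv_0,\vvv_0),S(\vvv_1,\vvv_1)\}$, so $\max_{\vvv_0,\vvv_1} S(\vvv_0,\vvv_1) = \max_{\vvv} S(\vvv,\vvv)$; that is, the maximum of $S$ is attained on the diagonal $\vvv_0=\vvv_1$. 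Picking any $\vvv^\star \in \arg\max_\vvv S(\vvv,\vvv)$ and setting $p_k = \eta_k P_{\rm A} f_k(\vvv^\star)\tau_0^\star/\tau_{1}^\star$ then gives a feasible solution attaining the optimal value with $\vvv_0^\star=\vvv_1^\star=\vvv^\star$, establishing the proposition.

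I expect the only real difficulty to be conceptual: realizing that once the energy constraints are made active, the seemingly coupled choice of $\vvv_0$ and $\vvv_1$ collapses to maximizing the single symmetric form $S(\vvv_0,\vvv_1)$, after which one line of AM--GM finishes the argument. Minor care is needed for degenerate devices with $f_k(\vvv_1^\star)=0$ (or channels that are identically zero), but such terms contribute nothing to $S$ and their powers can be set to zero, so they do not affect the conclusion.
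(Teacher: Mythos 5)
Your proposal is correct and follows essentially the same route as the paper: activate the energy-causality constraints to eliminate the powers, reduce the phase-shift optimization to maximizing the symmetric form $\sum_k \alpha_k\,|h^H_{d,k}+\q_k^H\vvv_0|^2\,|h^H_{d,k}+\q_k^H\vvv_1|^2$, and then show the maximum is attained on the diagonal $\vvv_0=\vvv_1$. The only difference is the elementary inequality used for the last step --- you apply AM--GM termwise, $f_k(\vvv_0)f_k(\vvv_1)\leq \tfrac{1}{2}\left(f_k(\vvv_0)^2+f_k(\vvv_1)^2\right)$, whereas the paper uses the Cauchy--Schwarz inequality followed by bounding each factor by the maximum of the quartic objective; both yield the identical conclusion, and your handling of the $\tau_0,\tau_1>0$ and degenerate-channel edge cases is a harmless extra precaution.
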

\begin{proof}
First, it can be easily shown that constraint \eqref{eq201} is met with equality for the optimal solution since  otherwise $p_k$ can be always increased to improve the objective value until  \eqref{eq201} becomes active. Then, substituting  \eqref{eq201} into the objective function eliminates  $\{p_k\}$.   As such, for any given $\tau_0$ and $\tau_1$, (P1) is equivalent to
 \begin{subequations} \label{sub:probm20}
\begin{align}
 \mathop {\mathrm{max} }\limits_{{ \vvv_0, \vvv_1 } } ~~&   \sum_{k=1}^{K}  \alpha_k    | h^H_{d,k} +   \q^H_k \vvv_0 |^2| h^H_{d,k} +  \q^H_k \vvv_1 |^2    \\
\mathrm{s.t.} ~~
& \text{(4e), (4f),}
\end{align}
\end{subequations}
where  $\alpha_k=\frac{\tau_0 P_{\rm A}  \eta_k } {  {\tau}_{\rm 1}\sigma^2  }$.
Denote by $\w$ the vector maximizing $ \sum_{k=1}^{K}\alpha_k| h^H_{d,k} + \q^H_k \vvv |^4$ subject to constraints $|[\vvv]_n|=1, \forall n$. For the objective function in (5a), we can establish  the following inequalities
{\begin{align} 
\!\!\!\!  &\sum_{k=1}^{K}    \left(  \sqrt{\alpha_k  }  | h^H_{d,k} + \q^H_k \vvv_0 |^2 \right)   \left( \sqrt{\alpha_k  } | h^H_{d,k} +\q^H_k \vvv_1 |^2 \right)  \\
\!\!\!\! \overset{(a)}{\leq} &\sqrt{ \left(\sum_{k=1}^{K} \alpha_k| h^H_{d,k} +\q^H_k \vvv_0 |^4\right)\left( \sum_{k=1}^{K}\alpha_k | h^H_{d,k} + \q^H_k \vvv_1 |^4\right) }  \nonumber \\
 \!\!\!\! \overset{(b)}{\leq}& \sqrt{ \left(\sum_{k=1}^{K}\alpha_k | h^H_{d,k} +\q^H_k \w |^4\right)^2 }= { \sum_{k=1}^{K} \alpha_k| h^H_{d,k} + \q^H_k \w|^4},
\end{align}}where $(a)$ is due to the Cauchy-Schwarz inequality and $(b)$ holds since $\w$ maximizes $ \sum_{k=1}^{K}\alpha_k| h^H_{d,k} + \q^H_k \vvv |^4$  and the equality holds when  $\vvv^{\star}_0=\vvv^{\star}_1=\w$. 
\end{proof}

Proposition 1  explicitly shows that dynamic IRS beamforming  is not needed for UL WPT and DL WIT and using constant passive beamforming is sufficient to maximize the sum throughput of the considered system. As such, if the HAP is in charge of computing the IRS phase shifts, it only needs to feed back $N$ phase-shift values (i.e., $\vvv_0$) to the IRS, rather than $2N$ (i.e., $\vvv_0$ and $\vvv_1$), which reduces the signalling overhead and the associated delay, especially for practically large $N$. Furthermore, exploiting  Proposition 1, we only need to solve the following problem, which involves a smaller number of optimization variables (i.e., IRS phase shifts) 
 \begin{subequations} \label{probm20}
\begin{align}
 \mathop {\mathrm{max} }\limits_{{\tau_{0}, {\tau}_{\rm 1},  \vvv_0 } } &{\tau}_{\rm 1}\log_2\left(1+\sum_{k=1}^{K}\frac{\tau_0 P_{\rm A}  \eta_k|h^H_{d,k} +     \q^H_k \vvv_0  |^4 }{{\tau}_{\rm 1}\sigma^2}\right)   \label{eqNdy:obj}\\
\mathrm{s.t.} ~~& \tau_{0}+{\tau}_{\rm 1}\leq T_{\mathop{\max}}, \label{eqNdy202}\\
&\tau_{0}\geq0, ~ {\tau}_{\rm 1}\geq  0, \label{eqNdy203} \\
& |[\vvv_0]_n|=1,  n=1,\cdots, N.  \label{eq:Ndy:modulus1}
\end{align}
\end{subequations}
Although simpler, problem \eqref{probm20} is still a non-convex optimization problem.

\subsection{Proposed Joint Optimization Algorithm}
To deal with the non-convex objective function \eqref{eqNdy:obj}, we introduce a slack variable $S$ and reformulate  problem  \eqref{probm20} as follows
 \begin{subequations} \label{probm:JO:20}
\begin{align}
 \mathop {\mathrm{max} }\limits_{{\tau_{0},{\tau}_{\rm 1}, \vvv_0 } } &{\tau}_{\rm 1}\log_2\left(1+\frac{S}{{\tau}_{\rm 1}}\right)\\
\mathrm{s.t.} ~~& S \leq  \sum_{k=1}^{K}\frac{\tau_0 P_{\rm A}  \eta_k|h^H_{d,k} +   \q^H_k \vvv_0  |^4 }{\sigma^2},  \label{eqJO201}\\
~~& \tau_{0}+{\tau}_{\rm 1}\leq T_{\mathop{\max}}, \label{eqJO202}\\
&\tau_{0}\geq0, ~{\tau}_{\rm 1}\geq  0,\label{eqJO203} \\
& |[\vvv_0]_n|=1,  n=1,\cdots, N. \label{eq:JO:modulus1}
\end{align}
\end{subequations}
{Note that for the optimal solution of problem \eqref{probm:JO:20}, constraint  \eqref{eqJO201} is met with equality, since otherwise we can always  increase the objective value by increasing $S$ until  \eqref{eqJO201} becomes active.}
Let $ | h^H_{d,k} +  \q^H_k \vvv_0  | = |   {\bar \q}^H_k \bar \vvv_0   | $, where ${\bar\vvv}_0 = [ \vvv^H_0 \: 1]^H$ and $ {\bar \q}^H_k  =  [  {\q}^H_k  \: h^H_{d,k}] $. Define $\Q_k={\bar \q}_k{\bar \q}^H_k$ and $\bm{V}_0=\bm{\bar{v}}_0\bm{\bar{v}}_0^H$ which needs to satisfy  $\bm{V}_0\succeq \bm{0}$ and ${\rm{rank}}(\bm{V}_0)=1$.
Then, \eqref{eqJO201}  can be expressed as
\begin{align} \label{eq:AO:slack}
S\leq   \sum_{k=1}^{K}\frac{\eta_k  P_{\rm A} \tau_0  [{\rm{Tr}}(\V_0\Q_k)]^2 }{\sigma^2}.
\end{align}

The key observation is that  although $\tau_0 [{\rm{Tr}}(\V_0\Q_k)]^2= [{\rm{Tr}}(\V_0\Q_k)]^2/\frac{1}{\tau_0}$  in \eqref{eq:AO:slack}  is not  jointly convex with respect to $\V_0$ and $\tau_0$, it is jointly convex with respect to ${\rm{Tr}}(\V_0\Q_k)$ and $\frac{1}{\tau_0}$. Recall that any convex function is globally lower-bounded by its first-order Taylor expansion at any point. This thus motivates us to apply the successive convex approximation (SCA) technique for solving problem \eqref{probm:JO:20}. Therefore, with given local point  $\hat \V_0$ and $\hat \tau_0$, we obtain the following lower bound
\begin{align} 
\!\!& \frac{ [{\rm{Tr}}(\V_0\Q_k)]^2}{{1}/{\tau_0}} \geq   \frac{ [{\rm{Tr}}(\hat \V_0\Q_k)]^2}{{1}/{\hat \tau_0}}  - \frac{ [{\rm{Tr}}(\hat \V_0\Q_k)]^2}{{1}/{\hat \tau_0}} \left(\frac{1}{\tau_0} - \frac{1}{\hat \tau_0} \right)
 \nonumber  \\
\!\!& + \frac{2 {\rm{Tr}}(\hat \V_0\Q_k)}{{1}/{\hat \tau_0}} \left(   {\rm{Tr}}( \V_0\Q_k) -  {\rm{Tr}}(\hat \V_0\Q_k) \! \right) \!    \triangleq \!  \mathcal{G}(\V_0,\tau_0).
\end{align}
With the lower bound in (11), problem \eqref{probm:JO:20} is approximated as the following problem
 \begin{subequations}\label{probm:SDR:20}
\begin{align}
 \mathop {\mathrm{max} }\limits_{{\tau_{0}, {\tau}_{\rm 1}, \V_0, S } } &{\tau}_{\rm 1}\log_2\left(1+\frac{S}{{\tau}_{\rm 1}}\right)\\
\mathrm{s.t.} ~~& S \leq  \sum_{k=1}^{K}\frac{P_{\rm A}  \eta_k }{\sigma^2}  \mathcal{G}(\V_0,\tau_0), \\
~~& \tau_{0}+ {\tau}_{\rm 1}\leq T_{\mathop{\max}}, \label{eqSDR202}\\
&\tau_{0}\geq0, ~{\tau}_{\rm 1}\geq  0,\label{eqSDR203} \\
~~~~& [\bm{V}_0]_{n,n} = 1,  n=1,\cdots, N+1, \label{P6:SDR:C9}  \\
~~~~&{\rm{rank}}(\bm{V}_0)=1, \bm{V}_0 \succeq 0.  \label{P6:SDR:C10}
\end{align}
\end{subequations}
Note that  by relaxing the rank-one constraint in \eqref{P6:SDR:C10}, problem \eqref{probm:SDR:20} becomes a convex semidefinite program (SDP)  and  we can successively solve it by using standard convex optimization solvers such as CVX, until convergence is achieved. After convergence, Gaussian randomization can be applied to obtain a high-quality solution. {Alternatively, instead of relaxing the rank-one constraint, we can further transform it into an equivalent constraint based on the largest singular value (see Section IV-A for details) and then solve the resulting problem using the penalty method \cite{wu2019jointSWIPT}.}  The computational complexity of this algorithm lies in solving the SDP and is given by $\mathcal{O}( I_{JO}N^{6.5}  )$ where $I_{JO}$ denotes the number of iterations required for convergence.  Since all optimization variables are optimized simultaneously, the joint optimization algorithm serves as a benchmark scheme for other schemes having lower complexities.
\vspace{-0.3cm}
\subsection{Proposed Alternating Optimization Algorithm}
Next, we propose an efficient alternating optimization algorithm where the phase shifts and time allocation are alternately optimized until convergence is achieved.  {The key advantage of the algorithm is that it admits a (semi) closed-form solution in each iteration, which thus avoids  the computational complexity incurred by using SDP solvers in Section III-B.}

First, for any fixed $ \vvv_0$, it can be shown that problem \eqref{probm20} is simplified to a convex optimization problem where the optimal time allocation can be obtained by using the Lagrange duality method as in \cite{ju14_throughput} where the details are omitted here for brevity. Second, for any fixed $\tau_0$ and $\tau_1$, problem \eqref{probm20} is reduced to
 \begin{subequations} \label{probm:AO:20}
\begin{align}
 \mathop {\mathrm{max} }\limits_{{ \vvv_0 } } ~~&   \sum_{k=1}^{K}  \alpha_k|h^H_{d,k} + \q^H_k \vvv_0 |^4 =    \sum_{k=1}^{K}  \alpha_k   (\bar \vvv_0^H \Q_k \bar \vvv_0)^2  \label{eq:AO:obj} \\
\mathrm{s.t.} ~~  
& |[\vvv_0]_n|=1,  n=1,\cdots, N, \label{eq:AO:modulus1}
\end{align}
\end{subequations}
where $\alpha_k=\frac{\tau_0 P_{\rm A}  \eta_k } {  {\tau}_{\rm 1}\sigma^2  }$ as in \eqref{sub:probm20}. Although maximizing a convex function does not lead to a convex optimization problem, the convexity of \eqref{eq:AO:obj}  allows us to apply the SCA technique for solving problem \eqref{probm:AO:20}. Specifically, for a given local point $ \hat \vvv_0$,  the  first-order Taylor expansion based lower bound for the $k$-th term in  \eqref{eq:AO:obj} can be expressed as  
\begin{align}\label{eqAO}
(\bar \vvv_0^H \Q_k \bar \vvv_0)^2 \geq &~2\hat \vvv_0^H \Q_k  \hat \vvv_0 \left(     2 \mathrm{Re}\{  \hat \vvv^H_0 \Q_k  \bar \vvv_0 \}  -  2  \hat \vvv^H_0 \Q_k  \hat \vvv_0    \right) \nonumber \\
&~+ (\hat \vvv_0^H \Q_k  \hat \vvv_0)^2   \nonumber \\
=& ~4 \mathrm{Re}\{ C_k \hat \vvv_0^H \Q_k   \bar\vvv_0 \} - 3C^2_k,
\end{align}
where $C_k=\hat \vvv_0^H \Q_k  \hat \vvv_0$.  Based on \eqref{eqAO},   a lower bound for \eqref{eq:AO:obj} is given by 
\begin{align}\label{SCA:obj}
  4 \mathrm{Re} \left\{  \left(\sum_{k=1}^{K}  \alpha_kC_k  \hat \vvv_0^H \Q_k   \right)   \bar\vvv_0 \right\} - 3  \sum_{k=1}^{K}\alpha_kC^2_k.
\end{align}
Based on \eqref{SCA:obj}, it is not difficult to show that the optimal solution satisfying \eqref{eq:AO:modulus1} is given by $ \bar\vvv_0=e^{j\arg( {\bm\beta})}$ where ${\bm \beta}= (   \sum_{k=1}^{K}  \alpha_kC_k  \hat \vvv_0^H \Q_k    )^H$.

Note that  the objective value of problem \eqref{probm20} is non-decreasing by alternately optimizing the time allocation and phase shifts, and also upper-bounded by a finite value. Thus, the proposed algorithm is guaranteed to converge.
The complexity of this algorithm mainly lies in the calculation of the phase shifts for problem \eqref{probm:AO:20} and thus is given by $\mathcal{O}( I_{AO}N )$, where $I_{AO}$ is the number of iterations required for convergence.





\section{Numerical Results }
This section presents simulation results to demonstrate the effectiveness of the proposed solutions and provide useful insights for IRS-aided WPCN design.  
 The HAP and IRS are respectively located at $(0,0,0)$ meter (m) and $(10,0, 3)$ m, as shown in Fig. \ref{simulation:setup}. The user devices are randomly and uniformly distributed within a radius of $1.5$ m centered at $(10,0,0)$ m. The pathloss exponents of both the HAP-IRS and IRS-device channels are set to $2.2$, while those of the HAP-device channels are set to $2.8$. {Furthermore, Rayleigh fading is adopted as the small-scale fading for all channels.} The signal attenuation at a reference distance of $1$ m is set as $30$ dB.  Other system parameters are set as follows: $\eta_k=0.8,~\forall k$, $\sigma^2=-85$ dBm,   $T_{\max}=1$ s, and $P_{\rm A}=40$ dBm.

\begin{figure}[!t]
\centering
\includegraphics[width=2.4in]{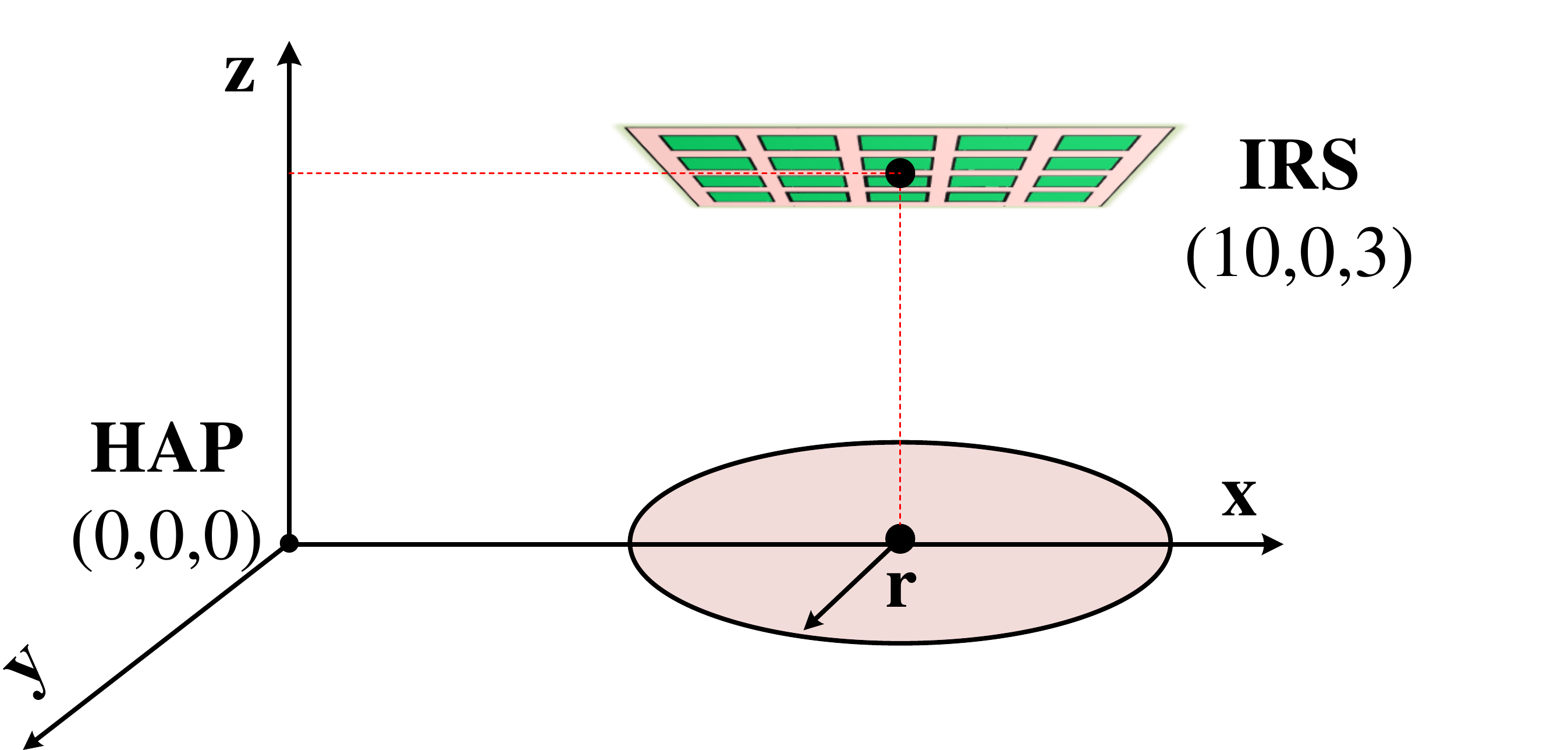}  
\caption{Simulation setup. } \vspace{-0.5cm}\label{simulation:setup}
\end{figure}

\begin{figure*}[!t]\vspace{-0.0cm}
\centering
\subfigure[Performance comparison.]  {\includegraphics[width=2.35in, height=1.8in]{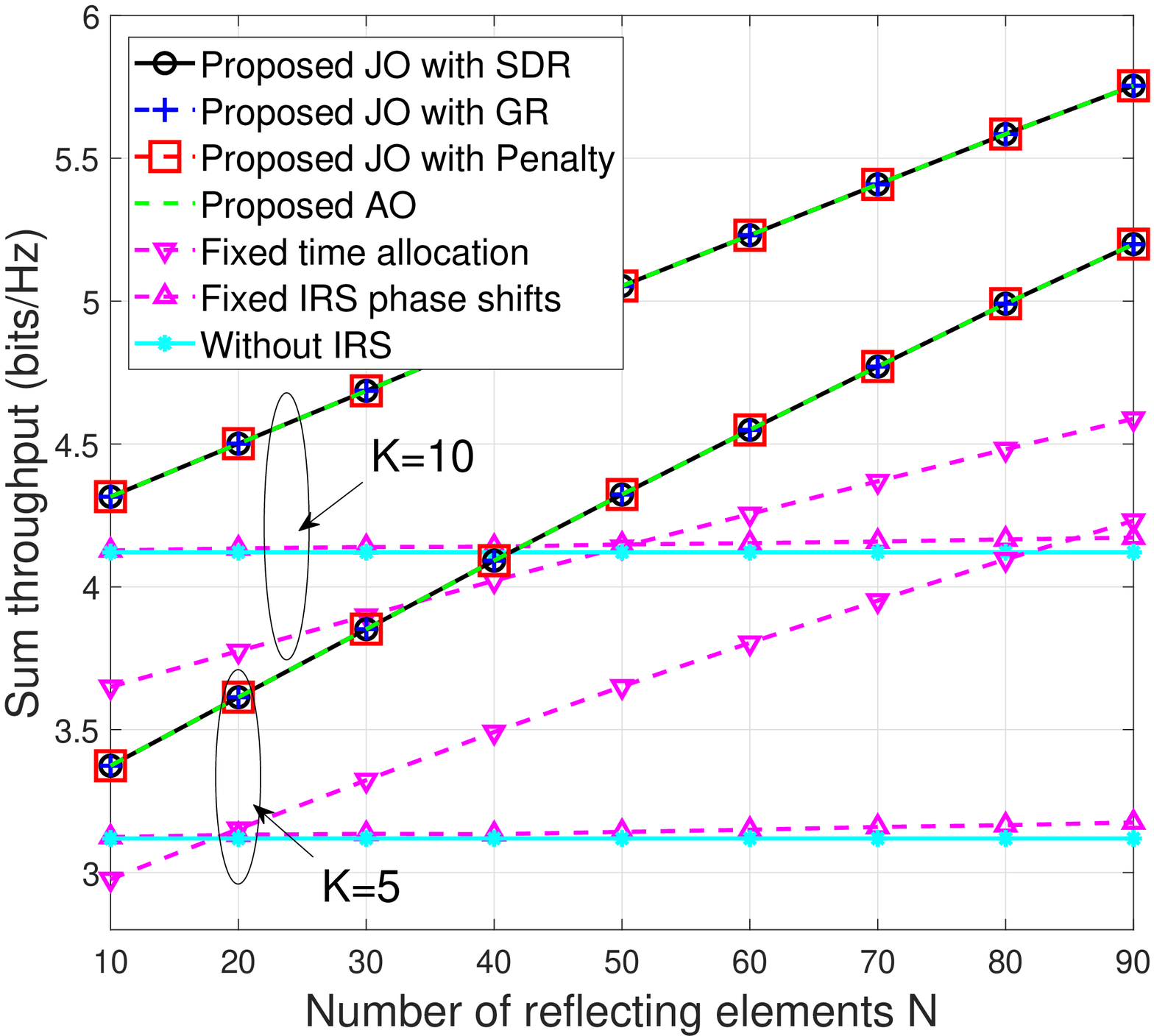}\label{N:versus:rate}}
\subfigure[Impact of $N$ on DL WPT duration.]{\includegraphics[width=2.35in, height=1.8in]{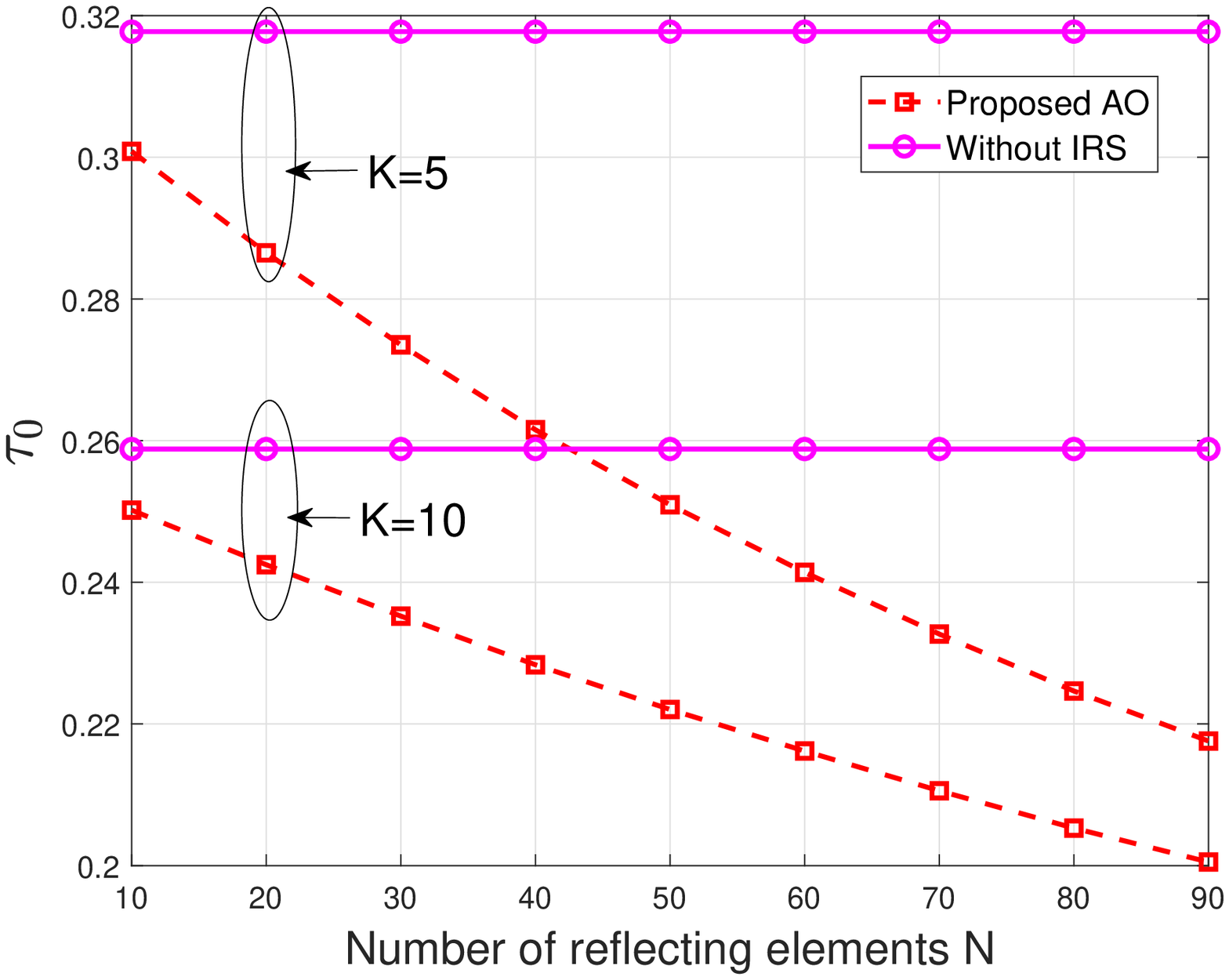}\label{N:versus:tau0}}
\subfigure[Impact of $N$ on user harvested energy.]{\includegraphics[width=2.35in, height=1.8in]{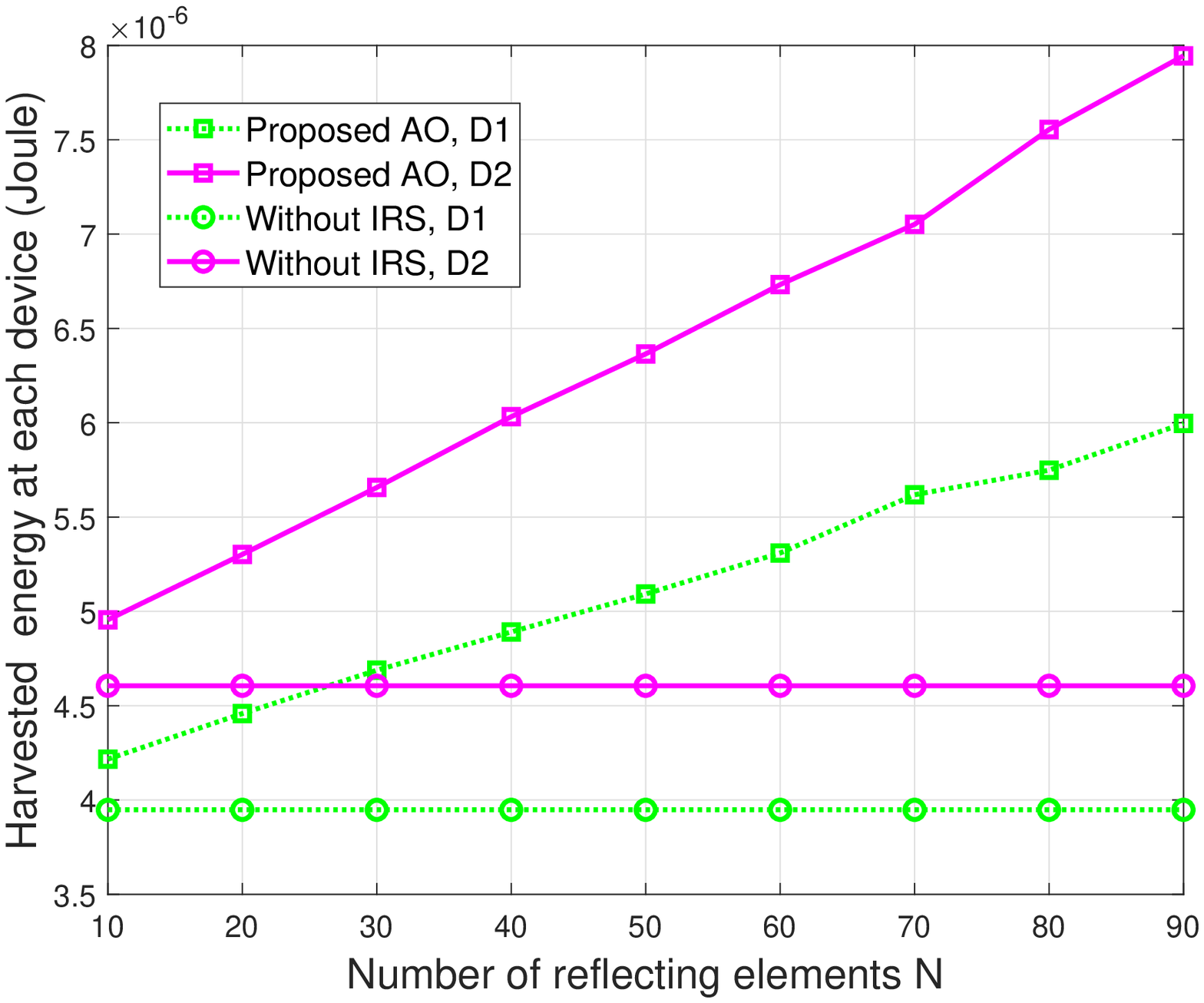}\label{N:versus:energy}}
\caption{Simulation results. } \label{pb}\vspace{-0.6cm}
\end{figure*}
\vspace{-0.1cm}
\subsection{Performance Comparison}
In Fig. \ref{N:versus:rate}, we plot the sum throughput versus the number of IRS elements for $K=5$ and $K=10$. {For comparison, we consider the following schemes: 1) ``Proposed JO with SDR'' where problem \eqref{probm:SDR:20} with ${\rm{rank}}(\bm{V}_0)=1$ in \eqref{P6:SDR:C10} relaxed  is solved successively, and thus, this scheme serves as a performance upper bound;  2) ``Proposed JO with GR'' where Gaussian randomization is applied to recover a rank-one $\bm{V}_0$ based on the solution of the scheme in 1); 3) ``Proposed JO with Penalty'' where we replace ${\rm{rank}}(\bm{V}_0)=1$  by ${\rm Tr}(\bm{V}_0) - \lambda_{\max}(\bm{V}_0) \leq  0$  with  $\lambda_{\max}(\bm{V}_0)$ denoting the largest singular value of $\bm{V}_0$ and then solve the resulting problem using the penalty method \cite{wu2019jointSWIPT};
4) ``Proposed AO"  in Section III-C;  5) Fixed time allocation with optimized IRS phase shifts, i.e.,  $\tau_0=0.5 T_{\max}$; 6) Fixed IRS phase shifts with optimized time allocation, i.e.,  $\theta_n=0, \forall n$; and 7) Without IRS.}

It is observed from Fig. \ref{N:versus:rate} that   the sum throughput gain achieved by our proposed JO/AO designs over the benchmark schemes increases as $N$ increases  for both $K=5$ and $K=10$. In particular, the proposed AO algorithm achieves almost the same performance as the proposed JO with SDR/GR/Penalty and  is thereby a practically appealing solution considering its low complexity, especially for large $N$.  Besides, the performance of the scheme with fixed phase shifts is less sensitive to increasing $N$ and only achieves a marginal gain over the system without IRS, whereas the scheme with fixed time allocation performs even  worse than the system without IRS for small $N$, but outperforms the scheme with fixed phase shifts for large $N$. This is expected since as $N$ increases, the passive beamforming gain achieved by phase-shift optimization helps compensate the performance loss incurred by fixed time allocation. Nevertheless, Fig. \ref{N:versus:rate} highlights the importance of the joint design of the IRS phase shifts and the time allocation.

\vspace{-0.3cm}
\subsection{Impact of IRS on WPCNs}
We note that the significant throughput improvement shown in Fig. \ref{N:versus:rate} is due to the deployment of the IRS and not due to an increase in the total energy consumption at the HAP, which is given by $E_{\rm HAP}=P_{\rm A}\tau_0$. To illustrate this explicitly, we plot in Fig.  \ref{N:versus:tau0} the optimized DL WPT duration $\tau_0$  versus $N$ obtained with the proposed AO algorithm and  without IRS.  It is observed that for IRS-assisted WPCNs, the optimized DL WPT duration  decreases as $N$ increases and thus the total system energy consumed at the HAP $E_{\rm HAP}$ is actually reduced. This also leaves devices more transmission time for UL WIT, which benefits the sum throughput since $R_{\rm sum}$ is monotonically increasing in $\tau_1$.  {This suggests that integrating IRS into WPCNs introduces a desirable ``double-gain'' as it not only improves the system throughput but also reduces the energy consumption, thus rendering this architecture spectrum and energy efficient.}

Furthermore,  although the DL WPT duration $\tau_0$ decreases due to the deployment of the IRS, it is not at the cost of reducing the energy harvested  at the devices. In Fig. \ref{N:versus:energy}, we plot the harvested energy of two randomly selected devices (e.g., D1 and D2), i.e., $E^h_k =\eta_kP_{\rm A}|h^H_{d,k} +   \bm{q}_k^H \vvv_0|^2\tau_0$, versus $N$ when $K=5$. One can observe that  the harvested energy monotonically increases with $N$ for both devices. Considering the decrease of $\tau_0$ shown in Fig.  \ref{N:versus:tau0}, it is not difficult to conclude that the increase of $E^h_k$ is solely due to the improved effective channel power gain $|h^H_{d,k} +   \bm{q}_k^H \vvv_0|^2$, which further demonstrates the effectiveness of IRS for WPCNs.
\vspace{-0.4cm}

\section{Conclusions}
This paper studied  IRS-assisted WPCNs employing NOMA for UL WIT.  
We first unveiled that   dynamic  IRS beamforming cannot improve the performance of the considered system, which simplifies the problem and reduces the signalling overhead. Based on this result, we proposed both joint and alternating optimization algorithms for system throughput maximization where the latter admits closed-form expressions and is practically appealing.  Numerical results showed that our proposed designs are able to  drastically improve the system performance compared to several baseline schemes and also shed light on how the IRS improves the throughput of WPCNs while decreasing the energy consumption as the number of IRS elements increases. {In future work, it is worth investigating the effectiveness of dynamic IRS beamforming for WPCNs with imperfect CSI and/or SIC, different user weights,  multiple antennas, etc.}


\vspace{-0.5cm}


\end{document}